\newtheorem{theorem}{Theorem}[section]
\newtheorem{lemma*}{Lemma}
\def \R {\mathbb{R}}
\def \F {\mathbb{F}}
\def \U {\mathcal{U}}
\def \C {\mathcal{C}}
\def \D {\mathcal{D}}
\newcommand{\CF}{\mathrm{CF}}
\newcommand{\MP}{\mathrm{MP}}
\title{Neural Ideals in SageMath}
\author[E. Petersen]{Ethan Petersen}
\address[E. Petersen]{Department of Mathematics, Rose-Hulman Institute of Technology, Terre Haute, IN 47803}
\email{peterseo@rose-hulman.edu}
\author[N. Youngs]{Nora Youngs}
\address[N. Youngs]{Department of Mathematics and Statistics, Colby College, Waterville, ME 04901}
\email{nora.youngs@colby.edu}
\author[R. Kruse]{Ryan Kruse}
\address[R. Kruse]{Mathematics Department, Central College,  Pella, IA 50219}
\email{kruser1@central.edu}
\author[D. Miyata]{Dane Miyata}
\address[D. Miyata]{Mathematics Department,
Willamette University,
Salem, OR 97301}
\email{dmiyata@willamette.edu}
\author[R. Garcia]{Rebecca Garcia}
\author[L.~D. Garc\'{\i}a Puente]{Luis David Garc\'{\i}a Puente}
\address[R. Garcia and L.~D. Garc\'{\i}a Puente]{Department of Mathematics and Statistics,
Sam Houston State University,
Huntsville, TX 77341-2206}
\email{rgarcia@shsu.edu}
\email{lgarcia@shsu.edu}
\date{\today}
\begin{document}

\begin{abstract}
A major area in neuroscience research is the study of how the brain
processes spatial information. 
Neurons in the brain represent external stimuli via neural
codes. These codes often arise from stereotyped stimulus-response
maps, associating to each neuron a convex receptive field.  
An important problem consists in determining what stimulus space
features can be extracted directly from a neural code. 
The neural ideal is an algebraic object that  
encodes the full combinatorial data of a neural code. This ideal can
be expressed in a canonical form that directly translates to a minimal
description of the receptive field structure intrinsic to the code.    
In here, we describe a SageMath package that contains several
algorithms related to the canonical form of a neural ideal. 
%This package also provides an algorithm  to compute the primary
%decomposition of a pseudo-monomial ideal.
\end{abstract}

\maketitle

\section{Introduction}\label{sec:intro} 

Due to many recent technological advances in neuroscience,
researchers' ability to collect neural data has increased
dramatically. With this comes a need for new methods to process and
understand this data. One major question faced by researchers is to
determine how the brain encodes spatial features of its environment
through patterns of neural activity, as with place cell codes
\cite{OD71}.  In the recent paper \cite{CIVY13},  
%``The Neural Ring: An Algebraic Tool for Analyzing the Intrinsic Structure of Neural Codes'' by 
Curto et al. %Itskov,  Veliz-Cuba, and Youngs, 
phrase this question as, ``What can be inferred about the underlying
stimulus space from neural activity alone?'' 
To answer this question, Curto et al.
introduced the \emph{neural ring} and a related \emph{neural ideal},
algebraic objects that encode the full combinatorial data of a neural code.
They further show that the neural ideal can be expressed in a \emph{canonical form}
 that directly translates to a minimal description of the receptive
 field structure intrinsic to the code. 

In this article we describe a SageMath \cite{sage} package that
implements several algorithms to compute the neural ideal and its
canonical form, featuring a new iterative algorithm which proves to be
much more efficient than the original canonical form algorithm
outlined by Curto et al. in \cite{CIVY13}. This package also provides
an algorithm  to compute the primary decomposition of a
pseudo-monomial ideal. Accompanying these functions are others that
calculate related objects, such as the Gr\"obner basis, Gr\"obner fan,
universal Gr\"obner basis, and the neural ideal itself.  
% 
%Our software provides a computational platform to study the
%relationship among neural codes and their canonical forms. Our
%software has several methods to compute the neural ideal and related
%algebraic objects such as its universal Gr\"obner basis, which is
%often more efficient to compute, and in many cases provides much of
%the same receptive field structure. Perhaps more importantly, our
%software provides an open source and easily accessible way to explore
%the relationships between neural activity (in the form of neural
%codes) and the underlying stimulus space structure.   
%
In Section \ref{background} we give a short introduction to the
algebraic geometry of neural codes. 
Section \ref{iterative} describes a new and improved algorithm to
compute the canonical form of a neural ideal from a neural
code. Finally, Section \ref{tutorial} provides a tour of the
functionality within our code, centered on the canonical form
algorithm.  
    
%%%%%%%%%%%%%%%%%%%%%%%%%%%%%%%%%%%%%%%%%%%%%%%
\section{Background}\label{background}
In this section we give a brief introduction to the neural ring and
the neural ideal of a neural code. A more thorough background,
including necessary theorems and proofs, can be found in
\cite{CIVY13}. 

\subsection{Neural Codes and the Neural Ideal}

  A {\it neural code} $\C \subset{\{0,1\}^n}$ is a set of binary
  strings that represent neural activity. A `$1$' represents a firing
  neuron, while a `$0$' represents an idle neuron. For example, the
  presence of the word $1011$ in a 4-neuron code would indicate an
  instance when neurons 1, 3, and 4 were firing but neuron 2 was
  not. Given a neural code $\C\subset\{0,1\}^n$, the corresponding
  {\it neural ideal} $J_\C$ in the ring $\F_2[x_1,\dots,x_n]$ is
  defined by the following set of generators. For any $v \in
  \{0,1\}^n$, consider the polynomial 

$$\rho_v = \prod_{i=1}^n (1-v_i-x_i) = \prod_{\{i|v_i=1\} } x_i
\prod_{\{j|v_j=0\} } (1-x_j). %= \prod_{\{i\in supp(c)\} } x_i
                              %\prod_{\{j\notin supp(v)} (1-x_j) 
$$

Notice that $\rho_v(x)$ acts as a characteristic function for $v$,
since it satisfies $\rho_v(v) = 1$ and $\rho_v(x) = 0$ for any $x\neq
v \in \{0,1\}^n$. The \emph{neural ideal} $J_\C\subseteq
\F_2[x_1,\dots,x_n]$ associated to the neural code $\C$
is the ideal generated by the polynomials $\rho_v$ with $v \notin \C$, that is, 
\[J_\C=\langle \rho_v \mid v \notin \C\rangle. \]
%If $\C=2^{[n]}$ is the complete code, then we set $J_\C =\{0\}$.
%, since there are no $v\in \{0,1\}^n$ that are also not in $\C$.\\

\subsection{Realizations and the Canonical Form}

Many systems of neurons react to stimuli which have a natural
geographic association. One example is head direction cells in
rats, where each neuron responds to a preferred range of angles; these
stimuli come from the 1-dimensional set of possible angles for the
head. Another example is place cells (in rats), where
each neuron is associated to a place field or region of the rat's
2-dimensional environment.  In such a geographic setup, we would
assume that if two neurons are observed to fire together, then the
sets of stimuli for these neurons must overlap. The idea of a \emph{realization
  for a code} formalizes these notions.

Suppose $\mathcal{U} = \{U_1,\dots,U_n\}$ is a collection of open sets
with each $U_i \subset{X}$.  Here, $X\subset \R^n$ represents the
space of possible stimuli, and $U_i$ is the receptive field of the
$i^{th}$ neuron, the set of stimuli which will cause that neuron to
fire.  We say that $\U$ is a {\it realization} for a code $\C$, or
that $\C = \C(\U)$, if $$\C = \{v\in\{0,1\}^n \,\mid \,
(\bigcap_{v_i=1} U_i )\backslash \bigcup_{v_j=0} U_j\};$$ 

that is, $\C$ represents the set of regions defined by $\U$. It turns
out that by considering the ideal $J_\C$ for the code $\C$, we can
determine complete information about the interaction of the $U_i$ in
any realization $\U$ of $\C$. This is facilitated by the
\emph{canonical form} of the neural ideal $J_\C$. %, which we will now define.

A polynomial $f \in \F_2[x_1,\dots,x_n]$ is a \textit{pseudo-monomial}
if $f$ has the form \[f=\prod_{i\in \sigma}x_i \prod_{j\in \tau}
(1-x_j),\] where $\sigma\cap \tau = \emptyset$.  
An ideal $J \subset \F_2[x_1,\dots,x_n]$ is a \textit{pseudo-monomial
  ideal} if $J$ can be generated by a set of finitely many
pseudo-monomials. 
Let $J \subset \F_2[x_1,\dots,x_n]$ be an ideal, and $f \in J$ a
pseudo-monomial. Then $f$ is a \textit{minimal pseudo-monomial} of $J$
if there is no other pseudo-monomial $g \in J$ with $\deg(g) <
\deg(f)$ such that $f=gh$ for some $h \in \F_2[x_1,\dots,x_n]$. 
The \emph{canonical form} of a  pseudo-monomial ideal $J$, denoted
$\CF(J)$, is the set of \textit{all} minimal pseudo-monomials of $J$.  

The set $\CF(J)$ is unique for any given pseudo-monomial ideal $J$ and
$J=\langle \CF(J)\rangle$. It is important to  note that even though
$\CF(J)$ is made up of minimal pseudo-monomials, it does not
necessarily imply that $\CF(J)$ is a minimal set of generators for
$J$. The neural ideal $J_\C$ for a neural code  $\C$ is a
pseudo-monomial ideal since $J_\C= \langle\rho_v \mid v \notin
\C\rangle$ and each of the $\rho_v$'s is a pseudo-monomial.  The
following theorem describes the set of relations on any realization
$\U$ of $\C$ which $\CF(J)$ provides. 

\begin{theorem}\label{theorem:CF}
Let $\C\subset \{0,1\}^n$ be a neural code, and let
$\U=\{U_1,\dots,U_n\}$ be any collection of open sets in a stimulus
space $X$ such that $\C=\C(\U)$.  
Given $\sigma \subseteq \{1,\dots, n\}$, let $U_{\sigma} = \bigcap_{i
  \in \sigma} U_{i}$. Then 
 the \emph{canonical form} of $J_\C$ is: 
\begin{gather*}
J_\C= \bigg\langle \left\{x_\sigma \:|\: \sigma \text{ is minimal w.r.t. } U_\sigma=\emptyset \right\},\\
\left\{x_\sigma\prod_{i \in \tau}(1-x_i)
  \:|\:\sigma,\tau\neq\emptyset,\: \sigma\cap\tau=\emptyset,\:
  U_\sigma\neq\emptyset,\: 
\bigcup_{i\in\tau}U_i\neq X, \text{ and $\sigma,\tau$ are minimal
  w.r.t. }U_{\sigma}\subseteq\bigcup_{i\in \tau} U_i\right\},\\ 
\left\{\prod_{i\in \tau}(1-x_i)\:|\: \text{ $\tau$ is minimal w.r.t. }
  X\subseteq\bigcup_{i\in \tau}U_i\right\} \bigg\rangle. 
\end{gather*}
\end{theorem}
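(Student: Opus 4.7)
The plan is to reduce the theorem to a single characterization of when a pseudo-monomial lies in $J_\C$ in terms of $\U$, and then read off minimality by a routine case split.

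First I would establish the key lemma: for any pseudo-monomial $f = x_\sigma \prod_{i \in \tau}(1-x_i)$ with $\sigma \cap \tau = \emptyset$,
\[
f \in J_\C \iff U_\sigma \subseteq \bigcup_{i \in \tau} U_i,
\]
under the conventions $U_\emptyset = X$ and $\bigcup_{i \in \emptyset} U_i = \emptyset$. The algebraic direction rests on the identity
\[
f = \sum_{v \in S} \rho_v \quad \text{in } \F_2[x_1, \dots, x_n],
\]
where $S = \{v \in \{0,1\}^n : f(v) = 1\}$; one verifies this by expanding $\rho_v$ and observing that, in characteristic $2$, the partial sum $\sum_{u \in \{0,1\}^k} \prod_j (1 - u_j - x_j)$ collapses to $1$. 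From this identity, $f$ vanishes on $\C$ iff every $v \in S$ lies outside $\C$, iff $f \in \langle \rho_v : v \notin \C\rangle = J_\C$. The geometric half uses $\C = \C(\U)$ directly: a vector $v$ with $\sigma \subseteq \supp(v)$ and $\tau \cap \supp(v) = \emptyset$ lies in $\C$ exactly when some point $p \in X$ realizes that support pattern, i.e., when $U_\sigma \setminus \bigcup_{i \in \tau} U_i$ is nonempty. Hence $f$ vanishes on $\C$ iff $U_\sigma \subseteq \bigcup_{i \in \tau} U_i$.

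With the key lemma in hand, I would apply the definition of minimal pseudo-monomial directly. The pseudo-monomial divisors of $f = x_\sigma \prod_{i \in \tau}(1-x_i)$ are exactly $x_{\sigma'} \prod_{i \in \tau'}(1-x_i)$ with $\sigma' \subseteq \sigma$ and $\tau' \subseteq \tau$, so $f$ is minimal iff $U_\sigma \subseteq \bigcup_{i \in \tau} U_i$ holds while $U_{\sigma'} \not\subseteq \bigcup_{i \in \tau'} U_i$ for every proper sub-pair $(\sigma', \tau') \subsetneq (\sigma, \tau)$. A case split by emptiness of $\sigma, \tau$ then produces the three families: the case $\tau = \emptyset$ gives the first family (the condition becomes $U_\sigma = \emptyset$, minimal in $\sigma$); the case $\sigma = \emptyset$ dually yields the third; and when both are nonempty, the distinguished divisors $(\emptyset, \tau)$ and $(\sigma, \emptyset)$ supply the side conditions $\bigcup_{i \in \tau} U_i \neq X$ and $U_\sigma \neq \emptyset$, while the remaining (nonempty, nonempty) sub-pairs encode exactly the ``$\sigma, \tau$ minimal w.r.t.\ $U_\sigma \subseteq \bigcup_{i \in \tau} U_i$'' clause of the middle family.

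The hard part is the key lemma, specifically the nontrivial implication that a pseudo-monomial vanishing on $\C$ already lies in $J_\C$, rather than merely in the vanishing ideal after quotienting by the Boolean relations $x_i^2 - x_i$. Once the decomposition $f = \sum_{v \in S} \rho_v$ is in place this is immediate, but verifying that identity honestly in $\F_2[x_1,\dots,x_n]$---not only modulo Boolean relations---is the combinatorial heart of the argument; everything afterward is bookkeeping.
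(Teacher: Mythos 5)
Your argument is sound, but note that this paper never proves Theorem~\ref{theorem:CF} at all: it is quoted as background, with the proof deferred to \cite{CIVY13}, so the only comparison available is with that source. There your key lemma is precisely the receptive-field relationship lemma (membership $x_\sigma\prod_{i\in\tau}(1-x_i)\in J_\C$ if and only if $U_\sigma\subseteq\bigcup_{i\in\tau}U_i$, with $U_\emptyset=X$ and the empty union equal to $\emptyset$), and the passage from that lemma to the three families via minimality over divisor pairs $(\sigma',\tau')$ is the same bookkeeping, so your route is essentially the standard one rather than a new argument. The step you flag as the ``combinatorial heart'' does hold exactly in $\F_2[x_1,\dots,x_n]$, not merely modulo the Boolean relations: for a pseudo-monomial $f=x_\sigma\prod_{j\in\tau}(1-x_j)$, the set $S=\{v: f(v)=1\}$ consists of all $v$ agreeing with the prescribed pattern on $\sigma\cup\tau$, and summing $\rho_v$ over $S$ factors as $f\cdot\prod_{k\notin\sigma\cup\tau}\bigl(x_k+(1-x_k)\bigr)=f$, which is a genuinely exact identity (it does not even need characteristic $2$). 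This gives a slightly cleaner path than invoking the decomposition $I_\C=J_\C+\langle x_i(1-x_i)\rangle$, since it shows directly that a pseudo-monomial vanishing on $\C$ already lies in $J_\C$. Two small points to make explicit if you write this up: the divisors of a pseudo-monomial in the UFD $\F_2[x_1,\dots,x_n]$ are exactly the sub-products $x_{\sigma'}\prod_{j\in\tau'}(1-x_j)$ with $\sigma'\subseteq\sigma$, $\tau'\subseteq\tau$ (so the paper's definition of minimality reduces to ``no proper divisor lies in $J_\C$''), and in the mixed case the failure of sub-pairs with one empty component follows from $U_\sigma\neq\emptyset$ and $\bigcup_{i\in\tau}U_i\neq X$ by monotonicity, which is what reconciles your ``all proper sub-pairs fail'' condition with the theorem's phrasing.
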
 
%%%%%%%%%%%%%%%%%%%%%%%%%%%%%%%%%

We call the above three (disjoint) sets of relations comprising
$\CF(J_\C)$ the minimal Type 1, Type 2 and Type 3 relations,
respectively. 
Since the canonical form is unique, by Theorem \ref{theorem:CF}, any
receptive field representation of the code $\C = \C(U)$ satisfies the
following relationships: 

\begin{itemize}
\item[Type 1:] $x_{\sigma}\in \CF(J_\C)$ implies $U_{\sigma} =
  \emptyset$, but all intersections $U_{\gamma}$ where $\gamma
  \subsetneq \sigma$ are non-empty.
\medskip
\item[Type 2:] $x_{\sigma}\prod_{i\in \tau} (1-x_i) \in \CF(J_\C)$
  implies $U_{\sigma} \subseteq \bigcup_{i\in \tau} U_i$, but no
  lower-order intersection is contained in $\bigcup_{i\in \tau}U_{i}$
  and all the $U_i$s are needed for $U_{\sigma} \subseteq
  \bigcup_{i\in \tau} U_i$.
\medskip
\item[Type 3:] $\prod_{i\in \tau} (1-x_i) \in \CF(J_\C)$ implies $X
  \subseteq \bigcup_{i\in \tau} U_i$, but $X \not\subseteq
  \bigcup_{i\in \gamma} U_i$ for any  $\gamma \subsetneq \tau$. 
\end{itemize}

%Thus, each element in the canonical form CF($J_\C$) of a
%pseudo-monomial ideal provides a certain relation on the receptive
%fields. 

%\subsection{Gr\"{o}bner Bases}
%
%The canonical form is useful and groundbreaking, but it is too slow
%to practically use for neural codes with higher dimension, or more
%neurons. We discovered certain properties of the neural code $\C$ that
%cause the universal Gr\"{o}bner basis of the neural ideal $J_\C$ to be
%equal to \CF($J_\C$). We exploit these properties to more efficiently
%compute RF relations for large amounts of neural codes. 

%%%%%%%%%%%%%%%%%%%%%%%%%%%%%%%%%%%%%%%%%%%%%%%%%%%%%
\section{The Iterative Algorithm}\label{iterative}

In \cite{CIVY13}, the authors provided a first algorithm to obtain the
canonical form via the primary decomposition of the neural
ideal. Here we present an alternative algorithm.
Rather than using the primary decomposition, this algorithm begins
with the canonical form for a code consisting of a single
codeword, and iterates by adding the remaining codewords one by one
and adjusting the canonical form accordingly. We describe the process
for adding in a new codeword in Algorithm \ref{algo1}.  

%\noindent \subsection{Iterative Algorithm}\label{sect:alg}

\begin{algorithm}[!hbtp]
\DontPrintSemicolon
    \SetKwInOut{Input}{Input} 
    \SetKwInOut{Output}{Output}
    \Input{$\CF(J_\C)=\{f_1,\dots,f_k\}$, where $\C\subset\{0,1\}^n$ is
      a code on $n$ neurons, and a codeword $c \in \{0,1\}^n$} 
    \Output{$\CF(J_{\C\cup\{c\}})$}
    \Begin{
      $L \longleftarrow \{\}$,\
      $M \longleftarrow \{\}$,\
      $N \longleftarrow \{\}$\;
       \For{$x\longleftarrow 1$ \KwTo $k$}{
         \eIf{$f_i(c) = 0$}{$L \longleftarrow L \cup \{f_{i}\}$\;}
         {$M \longleftarrow M \cup \{f_{i}\}$\;}}
       \For{$f \in M$}{
         \For{$j \longleftarrow 1$ \KwTo $n$}{
           \If{$f(x_j-c_j)$ is not a multiple of an element of $L$
             \textbf{and} $(x_j-c_j-1) \nmid f$}{$N \longleftarrow N
             \cup \{f(x_j-c_j)\}$\;} }} 
      \Return{$L\cup N = \CF(J_{\C\cup\{c\}})$}}        
    \caption{Iterative step to update a given canonical form after adding a code word $c$}\label{algo1}
\end{algorithm}

In summary, each pseudo-monomial $f$ from $\CF(J_\C)$ for which
$f(c)=0$ is automatically in the new canonical form
$\CF(J_{\C\cup\{c\}})$.  For those pseudo-monomials $f\in \CF(J_\C)$
with $f(c)=1$, we consider the product of $f$ with all possible linear
terms $(x_j-c_j)$ (so this product will be $0$ when evaluated at $c$)
as a possible candidate for $\CF(J_\C)$; but we remove any such
products which are redundant. Here, a redundant pseudo-monomial is one
which is either a multiple of another already known to be in the
canonical form, or which is a multiple of a Boolean polynomial.  

Certainly, any polynomial $f$ output by this algorithm will have the
property that $f(v) = 0$ for all $v\in \C\cup\{c\}$.  A proof that
this algorithm outputs {\it exactly} $\CF(J_{\C\cup \{c\}})$ is found
in the Appendix. 

We have developed  SageMath code that computes the canonical form
using this iterative algorithm; see Section \ref{tutorial} for an
in-depth tutorial of our package.  We find that
this iterative algorithm performs substantially better than the
original algorithm from \cite{CIVY13}. %that relied on primary
%decomposition of pseudo-monomial ideals. 
We have also implemented our algorithm in Matlab \cite{githubcode}.  
%Say something about why sage is better, since by this table it isn't faster?

Table \ref{algo_runtimes} displays some runtime statistics regarding
our iterative canonical form algorithm. These runtime statistics were
obtained by running our SageMath implementation on 100 randomly
generated sets of codewords for each dimension $n = 4,\dots,
10$. These computations were performed on SageMath 7.2 running on a
Macbook Pro with a 2.8 GHz Intel Core i7 processor and 16 GB of
memory. We performed a similar test for our implementation of the
original canonical form algorithm in \cite{CIVY13} and on the Matlab
implementation of our iterative method. However, even in dimension
$5$ the original algorithm performs poorly. In our tests, we found
several codes for which the original algorithm took hundreds or even
thousands of seconds to compute the canonical form. For example, the
iterative algorithm takes 0.01 seconds to compute the canonical form
of the code below, but the original method takes 1 hour and 8 minutes
to perform the same computation.   
\begin{align*}
&10000,\,
 10001,\,
 01011,\,
 01010,\,
 10010,\,
 01110,\,
 01101,\,
 01100,\,
 11111,\,\\
 &11010,\,
 11011,\,
 01000,\,
 01001,\,
 00111,\,
 00110,\,
 00001,\,
 00010,\,
 00011,\,
 00101.
\end{align*}

% The next code takes iterative 0.14 sec and usual 1847 sec 
% \begin{align*}
% &10000,\,
%  01010,\,
%  10011,\,
%  10101,\,
%  10110,\,
%  10111,\,
%  01111,\,
%  11000,\, 
%  01101,\, \\
%  &01100,\, 
%  11111,\,
%  11110,\,
%  01011,\,
%  11011,\,
%  01001,\,
%  00110,\,
%  00010,\,
%  00011,\,
%  00100.
% \end{align*}

 We also found several codes on dimension $5$ for which the original
 algorithm halted due to lack of memory. One such code is  
\[01110,\,
 01111,\,
 11010,\,
 11100,\,
 11101,\,
 01011,\,
 01000,\,
 00110,\,
 00001,\,
 00011,\,
 10011,\,
 00100,\,
 00101,\,
 10111.\]
The iterative algorithm computes the canonical form of the previous
code in $0.0089$ seconds. In our Matlab implementation we also found
several examples in dimension $6$ for which the canonical form took
thousands of seconds to be computed.

\begin{table}[!hbtp]
\centering
\begin{tabular}{|l | l | l | l | l| l|}
\hline
Dimension & min & max & mean & median & std \\ \hline
4         & 0.000077    & 0.0034   & 0.0016 & 0.0018 & 0.00076 \\
5         & 0.000087    & 0.014  & 0.0076 & 0.0082 &  0.0034   \\
6         & 0.00012    & 0.108  & 0.049 & 0.051 & 0.024        \\
7         & 0.00012    & 0.621  & 0.298 & 0.323 & 0.135        \\
8         & 0.000097    & 4.011  & 1.964 & 2.276 & 1.036       \\
9         & 0.698    & 39.28  & 24.86 & 27.38 &   9.976      \\
10        & 0.229  & 350.5  & 237.45 & 271.3 & 87.1          \\ \hline
\end{tabular}
    \caption{Runtime statistics (in seconds) for the iterative
      CF algorithm in SageMath.}
\label{algo_runtimes}
\end{table}

% \begin{table}[!hbtp]
% \centering
% \label{algo_runtimes}
% \begin{tabular}{|l | l | l | l | l|}
% \hline
% Dimension & Iterative & Original & Iterative (Matlab) \\ \hline
% 4         & 0.0010    & 0.1487   & 0.0046             \\
% 5         & 0.0060    & 23.2171  & 0.0132             \\
% 10        & 186.5880  & *        & 19.8812            \\ \hline
% \end{tabular}
%     \caption{Comparisons of runtimes (in seconds) for computing the
%     canonical forms of neural ideals, running on SageMathCloud. A
%     "*" indicates a runtime that's much  greater than the non-"*"
%     runtimes in that row. These runtimes are averages of ten
%     randomly generated sets of codewords for each dimension.} 
% \label{algo_runtimes}
% \end{table}

Our code has many features beyond the computation of the canonical
form via the new iterative algorithm.  In the following tutorial, we
show how to use the code to compute any of the following: 

\begin{enumerate}
\item The neural ideal.
\item The canonical form for a neural ideal via the new iterative algorithm.
\item The canonical form for a neural ideal via the primary decomposition algorithm.
\item A tailored method to compute the primary decomposition of pseudo-monomial ideals.
%;  the actual primary decomposition computation may be chosen to use
%the original algorithm from \cite{CIVY13}, the standard
%Shimoyama-Yokoyama algorithm, or the Gianni-Trager-Zacharias
%algorithm (a comparison of these three choices can be found in the
%tutorial). 
\item Gr\"obner bases and the Gr\"obner fan of a neural ideal.
\item A method to test whether a code is a simplicial complex.
\end{enumerate}

%%%%%%%%%%%%%%%%%%%%%%%%%%%%%%%%%%%%%%%%%%%%%%%%%%%%%
\section{SageMath Tutorial}\label{tutorial}
We decided to develop our code on SageMath due to its open-source
nature, extensive functionality, and ease of use \cite{sage}. In this
tutorial, we'll begin with installing the \texttt{NeuralCodes}
package. Then, we'll walk through the most important functions of the
package.%, and end with a quick summary of other useful features. 

\subsection{Installation}
We will assume that SageMath is properly installed on the system. In our
tutorial, the files, \texttt{iterative\textunderscore canonical.spyx},
\texttt{neuralcode.py} and \texttt{examples.py}, are downloaded in the
folder \texttt{NeuralIdeals}. Now, we can load the package by: 

\begin{verbatim}
sage: load("NeuralIdeals/iterative_canonical.spyx")
sage: load("NeuralIdeals/neuralcode.py")
sage: load("NeuralIdeals/examples.py")
\end{verbatim}

The first file contains the iterative algorithm in Cython, so loading
will compile the code. Note that they must be loaded in this order, as
the Cython file must be compiled for the tests in
\texttt{neuralcode.py} to run. The second file,
\texttt{neuralcode.py}, holds all of the code that we will be
demonstrating. The third, \texttt{examples.py} has some additional
examples that can be loaded with \texttt{sage: neuralcodes()}. We expect
to include our code in SageMath, but currently it can be easily obtained at
\url{https://github.com/e6-1/NeuralIdeals}. 
We also want to note that running the package in the free
SageMathCloud (\url{https://cloud.sagemath.com}) is even easier. One
has to create a new project, upload all three files above and
then simply run \verb|load('neuralcode.py')|.

\subsection{Examples}
First, we define a neural code: 

\begin{verbatim}
sage: neuralCode = NeuralCode(['001','010','110'])
\end{verbatim}

Now, we can perform a variety of useful operations. We can compute the neural ideal:

\begin{verbatim}
sage: neuralIdeal = neuralCode.neural_ideal() 
sage: neuralIdeal
Ideal (x0*x1*x2 + x0*x1 + x0*x2 + x0 + x1*x2 + x1 + x2 + 1, x0*x1*x2 + x1*x2, 
x0*x1*x2 + x0*x1 + x0*x2 + x0, x0*x1*x2 + x0*x2, x0*x1*x2) of Multivariate 
Polynomial Ring in x0, x1, x2 over Finite Field of size 2
\end{verbatim}

We can compute the primary decomposition using a custom algorithm:

\begin{verbatim}
sage: pm_primary_decomposition(neuralIdeal)
[Ideal (x2 + 1, x1, x0) of Multivariate Polynomial Ring in x0, x1, x2 
  over Finite Field of size 2,
 Ideal (x2, x1 + 1) of Multivariate Polynomial Ring in x0, x1, x2 
  over Finite Field of size 2]
\end{verbatim}

We can compute the canonical form of the neural ideal.  

\begin{verbatim}
sage: canonicalForm = neuralCode.canonical()
sage: canonicalForm
Ideal (x1*x2, x1*x2 + x1 + x2 + 1, x0*x1 + x0, x0*x2) of
 Multivariate Polynomial Ring in x0, x1, x2 over Finite Field of size 2
\end{verbatim}

The method \texttt{canonical()} will use
the iterative algorithm by default. If we want to use the 
procedure described in \cite{CIVY13}, one needs to specify it with
\verb|canonical(algorithm="original")|. This procedure uses by default the tailored
pseudo-monomial primary decomposition, but one can make this explicit
with %outlined by Curto et al. in \cite{CIVY13}  with  
\verb|canonical(algorithm="original", decomposition_algorithm="pm")|.

\begin{verbatim}
sage: neuralCode.canonical(algorithm="original", decomposition_algorithm="pm")
Ideal (x1*x2, x0*x1 + x0, x1*x2 + x1 + x2 + 1, x0*x2) of 
 Multivariate Polynomial Ring in x0, x1, x2 over Finite Field of size 2
\end{verbatim}

Besides the tailored pseudo-monomial primary decomposition method, we
can also use the standard primary decomposition methods implemented in
SageMath such as Shimoyama-Yokoyama and  Gianni-Trager-Zacharias    
with the flags \verb|"sy"| and \verb|"gtz"|, respectively.
Table \ref{runtimes} compares the runtimes (in seconds) for the
primary decomposition of neural ideals using these three methods.  
Each entry in this table is the mean value of the runtimes for 50 randomly generated codes.
%A "*" in the table indicates a runtime that's much  greater than the non-"*" runtime in that row.

\begin{table}[!htbp]
	\centering
    \begin{tabular}{| l | l | l | l |}
    \hline
    Dimension & Pseudo-Monomial PD & Shimoyama-Yokoyama   & Gianni-Trager-Zacharias\\ \hline
  %  3         	& 0.310s        & 0.0331s	& 0.0734s\\
    4         	& 0.16        & 0.028	& 0.027  \\
    5         	& 0.85        & 0.14	& 0.07 \\
    6           & 6.43         & 2.98	    & 0.27 \\
    7		& 63.1      &  74.9	    & 1.9      \\
    8           & 578 & 3040 & 45  \\ \hline % these were obtained on an average of 13 runs
    \end{tabular}
    \caption{Comparisons of runtimes for primary decompositions of neural ideals.}
 \label{runtimes}
\end{table}

We found that, in general, the custom pseudo-monomial primary decomposition
algorithm does not outperform the  Gianni-Trager-Zacharias
algorithm. Nevertheless, this procedure is implemented to be used in characteristic 0.
It also works in fields of large positive characteristic. But 
in small characteristic this procedure may not terminate.

We also want to observe that the \verb|canonical()| method returns an
Ideal object whose generators are not factored and hence not easy to
interpret in our context. In order to obtain the generators in the
canonical form in factored form, we use \verb|factored_canonical()|: 

\begin{verbatim}
sage: neuralCode.factored_canonical()
[x2 * x1, (x1 + 1) * x0, (x2 + 1) * (x1 + 1), x2 * x0]
\end{verbatim}

From this output we can easily read off the RF structure of the neural
code. However, we also provide a different command that parses the
output to explicitly describe the RF structure.  

\begin{verbatim}
sage: neuralCode.canonical_RF_structure()
Intersection of U_['2', '1'] is empty
X = Union of U_['2', '1']
Intersection of U_['0'] is a subset of Union of U_['1']
Intersection of U_['2', '0'] is empty
\end{verbatim}

We can also compute the Gr\"obner basis and the Gr\"obner fan of the
neural ideal. Note that we could compute the neural ideal and use the
built in \verb|groebner_basis()| method, but that approach will not
impose the conditions of the Boolean ring (i.e. $x^{2} + x = 0$). Our
method uses the built-in \verb|groebner_basis()| command but also
reduces modulo the Boolean equations. 

\begin{verbatim}
sage: neuralCode.groebner_basis()
Ideal (x0*x2, x1 + x2 + 1) of Multivariate Polynomial Ring in x0, x1, x2 
 over Finite Field of size 2
sage: neuralIdeal.groebner_basis()
[x0*x2, x1 + x2 + 1, x2^2 + x2]
\end{verbatim}

\begin{verbatim}
sage: neuralCode.groebner_fan()
[Ideal (x1 + x2 + 1, x0*x2) of Multivariate Polynomial Ring in x0, x1, x2 
over Finite Field of size 2]
\end{verbatim}

A neural code is called \emph{convex} if its codewords correspond to regions defined
by an arrangement of convex open sets in Euclidean space. Convex codes
have been observed experimentally in many brain areas. Hence there
has been increased interest in understanding what makes a neural code
convex \cite{convex}. It has also been observed that if a code is a
simplicial complex then it is convex. One can check if a code is a
simplicial complex with the command  \verb|is_simplicial(Codes)|: 

\begin{verbatim}
sage: is_simplicial(['001','010','110'])
False
sage: is_simplicial(['000','001','010','100','110','011','101','111'])
True 
\end{verbatim}

% Additionally, \texttt{is\_simplicial(Codes)} uses
% \texttt{support(Codes)} to determine whether a simplicial complex is
% present: 
 
% \begin{verbatim}
% sage: support('0100011110101')
% [1, 5, 6, 7, 8, 10, 12]
% \end{verbatim}

% Many of these methods can be used in conjunction to search for
% patterns in research, and we have included several functions that
% proved useful in our own work. We will conclude this tutorial with a
% summary of these functions. \\ 

% The first, \texttt{compare\_groebner\_canonical(gb, cf)}, will return
% a list whose first element is True if the Gr\"obner basis and
% canonical form are equal, False otherwise, followed by the Gr\"obner
% basis and the canonical form. This function is used in the second,
% \texttt{compare\_all\_canonical\_groebner(d)}, which will print a
% comparative summary of all possible sets of neural codes in
% $d$-dimension. This also uses another function,
% \texttt{all\_neural\_codes(d)}, which returns a list of all possible
% neural codes in $d$-dimension. These are particularly useful for
% studying the relationship between the Gr\"obner basis of a neural
% ideal and the canonical form. Additionally, we have
% \texttt{generate\_random\_code(d)} which will return a randomly
% generated neural code in $d$-dimension. This is used in the final
% function, \texttt{generate\_random\_tests(n, d)} which will generate
% $n$ amount of random neural codes in $d$-dimension, and print a
% comparative summary in the same format as
% \texttt{compare\_all\_canonical\_groebner(d)}. 

\appendix
\section*{Appendix: Proof of Iterative Algorithm}

Here, we show that the process described in Algorithm \ref{algo1}
gives $\CF(J_{\C\cup\{c\}})$ from $\CF(J_\C)$ and $c$. 
%PROOF OF THE ALGORITHM
Throughout, we use the following conventions and terminology: $\C$ and
$\D$ are neural codes on the same number of neurons; so, $\C,\D
\subseteq \{0,1\}^n$. A monomial $x^\alpha$ is {\it square-free} if
$\alpha_i \in \{1,0\}$ for all $i=1,\dots,n$. A polynomial is {\it
  square-free} if it can be written as the sum of square-free
monomials.  For example: $x_1x_2+x_4+x_1x_3x_2$ is square-free.  There
is a unique square-free representative of every equivalence class of
$\F_2[x_1,\dots,x_n]/\langle x_i(1-x_i)\rangle$.  
For $h\in \F_2[x_1,\dots,x_n]$, let $h_R$ denote this unique square-free
representative of the equivalence class of $h$ in
$\F_2[x_1,\dots,x_n]/\langle x_i(1-x_i) \rangle$.  

Then, for $\CF(J_\C) = \{f_1,\dots,f_r\}$ and $\CF(J_\D) =
\{g_1,\dots,g_s\}$, we  define the set of {\it reduced
  products} $$P(\C,\D) \stackrel{\text{def}}{=} \{(f_ig_j)_R\,|\,
i\in [r], j\in [s]\}.$$ Note that as pseudo-monomials are square-free,
for each pair $i,j$ we have either $(f_ig_j)_R=0$ or $(f_ig_j)_R$ is a
multiple of both $f_i$ and $g_j$.  We define the {\it minimal reduced
  products} as $$\MP(\C,\D) \stackrel{\text{def}}{=}   \{h\in
P(\C,\D)\,|\, h\neq 0\text{ and } h\neq fg\text{ for any } f\in
P(\C,\D), \deg g\geq1\}.$$  

\begin{lemma*}\label{lem:cfalgorithm} If $\C,\D\subset\{0,1\}^n$, then
  the canonical form of their union is given by the set of minimal
  reduced products from their canonical forms: $\CF(J_{\C\cup \D}) =
  MP(\C,\D)$. 
\end{lemma*}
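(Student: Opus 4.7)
My plan is to prove both inclusions $\MP(\C,\D) \subseteq \CF(J_{\C \cup \D})$ and $\CF(J_{\C \cup \D}) \subseteq \MP(\C,\D)$. The two tools I would lean on throughout are: (i) the pseudo-monomial membership criterion proved in \cite{CIVY13}, namely that a pseudo-monomial $f$ lies in $J_\C$ if and only if $f(v)=0$ for every $v\in \C$; and (ii) a simple divisibility lemma, provable by induction on degree, stating that every pseudo-monomial in $J_\C$ is divisible by some element of $\CF(J_\C)$. I would also use the containment $J_{\C\cup\D}\subseteq J_\C \cap J_\D$, which is immediate from the definitions since $v\notin \C\cup \D$ implies both $v\notin \C$ and $v\notin \D$.

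Before the main argument I would handle the structural bookkeeping: writing $f_i = x_{\sigma_1}\prod_{j\in\tau_1}(1-x_j)$ and $g_j = x_{\sigma_2}\prod_{j\in\tau_2}(1-x_j)$, the reduction $(f_ig_j)_R$ is either zero (exactly when $(\sigma_1\cup\sigma_2)\cap(\tau_1\cup\tau_2)\neq\emptyset$, thanks to $x_k(1-x_k)\equiv 0$) or the pseudo-monomial $x_{\sigma_1\cup\sigma_2}\prod_{j\in\tau_1\cup\tau_2}(1-x_j)$. In the nonzero case it vanishes on $\C \cup \D$ (through $f_i$ on $\C$, through $g_j$ on $\D$), so criterion (i) places it in $J_{\C\cup\D}$. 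Hence every nonzero element of $P(\C,\D)$, and in particular every element of $\MP(\C,\D)$, is a pseudo-monomial in $J_{\C\cup\D}$.

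For the inclusion $\MP(\C,\D) \subseteq \CF(J_{\C\cup\D})$, let $h\in\MP(\C,\D)$ and suppose for contradiction there is a pseudo-monomial $p\in J_{\C\cup\D}$ with $p\mid h$ and $\deg p<\deg h$. Since $J_{\C\cup\D}\subseteq J_\C \cap J_\D$, lemma (ii) produces $f_i \in \CF(J_\C)$ and $g_j \in \CF(J_\D)$ both dividing $p$. Comparing index sets shows $(f_ig_j)_R$ divides $p$, and it is nonzero since $p$ is. Hence $h = (f_ig_j)_R \cdot t$ with $\deg t \geq 1$, contradicting $h\in \MP(\C,\D)$. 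For the reverse inclusion $\CF(J_{\C\cup\D}) \subseteq \MP(\C,\D)$, take $h\in \CF(J_{\C\cup\D})$. Applying lemma (ii) inside $J_\C$ and $J_\D$ gives $f_i, g_j$ dividing $h$, so $(f_ig_j)_R$ is a nonzero element of $P(\C,\D)$ dividing $h$; since $(f_ig_j)_R \in J_{\C\cup\D}$ and $h$ is a minimal pseudo-monomial there, $h = (f_ig_j)_R$. If moreover $h=fg$ for some $f\in P(\C,\D)$ with $\deg g\geq 1$, then $f$ would be a pseudo-monomial in $J_{\C\cup\D}$ of strictly smaller degree dividing $h$, again contradicting minimality; so $h\in \MP(\C,\D)$. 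I expect the main obstacle to be the reduction bookkeeping in the second paragraph, i.e., verifying carefully that products of pseudo-monomials, post-reduction, are either zero or still pseudo-monomials lying in the expected ideal, and that divisibility of $p$ by both $f_i$ and $g_j$ indeed passes to divisibility by $(f_ig_j)_R$. Once these combinatorial details are nailed down, the remaining inclusions reduce to short minimality arguments.
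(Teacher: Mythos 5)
Your proposal is correct and follows essentially the same route as the paper's proof: reduced products lie in $J_{\C\cup\D}$ by the vanishing criterion, every element of $\CF(J_{\C\cup\D})$ is divisible by some $f_i\in\CF(J_\C)$ and $g_j\in\CF(J_\D)$ and hence by $(f_ig_j)_R$, and minimality arguments force the two sets to coincide. The only difference is organizational (you prove $\MP(\C,\D)\subseteq\CF(J_{\C\cup\D})$ directly by contradiction, while the paper deduces it from the other inclusion plus the fact that $\MP(\C,\D)$ contains no multiples), and you make explicit the divisibility and membership lemmas that the paper invokes implicitly.
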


\begin{proof}
First, we show $\MP(\C,\D)\subseteq J_{\C\cup\D}$.  For any $h\in
\MP(\C,\D)$, there is some $f_i\in \CF(J_\C)$ and $g_j\in \CF(J_\D)$ so
$h=(f_ig_j)_R$.  In particular, $h\in J_\C$ as $h$ is a multiple of
$f_i$, and $h\in J_\D$ as it is a multiple of $g_j$.  Thus $h(c) = 0$
for all $c\in \C\cup \D$, so $h\in J_{\C\cup \D}$. 

Suppose $h\in \CF(J_{\C\cup \D})$.  Then as $J_{\C\cup \D}\subset
J_{\C}$, there is some $f_i\in \CF(J_{\C})$ so that $h=h_1f_i$, and
likewise there is some $g_j\in \CF(J_{\D})$ so $h=h_2g_j$ where $h_1,
h_2$ are pseudo-monomials. Thus $h$ is a multiple of $(f_ig_j)_R$ and
hence is a multiple of some element of $\MP(\C,\D)$.  But as every
element of $\MP(\C,\D)$ is an element of $J_{\C\cup \D}$, and $h\in
\CF(J_{\C\cup \D})$, this means $h$ itself must actually be in
$\MP(\C,\D)$. Thus, $\CF(J_{\C\cup\D}) \subseteq \MP(\C,\D)$.  For the
reverse containment, suppose $h\in \MP(\C,\D)$; by the above, $h\in
J_{\C\cup\D}$ It is thus the multiple of some $f\in \CF(J_{\C\cup\D})$.
But we have shown that $f\in \MP(\C,\D)$, which contains no multiples.
So $h=f$ is in $\CF(J_{\C\cup\D})$. 
\end{proof}

\begin{proof}[\bf{Proof of Algorithm}]   Note that if $c\in \C$, then
  $L = \CF(J_\C)$, so the algorithm ends immediately and outputs
  $\CF(J_\C)$; we will generally assume $c\notin \C$.  

 To show that the algorithm produces the correct canonical form, we
 apply Lemma \ref{lem:cfalgorithm}; it suffices to show that the set
 $L\cup N$ is exactly $\MP(\C, \{c\})$. This requires that all products
 are considered, and that we remove exactly those which are multiples
 or other elements, or zeros.  Note that $\CF(J_{\{c\}}) = \{ x_i - c_i
 \, | \, i \in [n]\}$. 

To see that all products are considered we will look at L and M
separately.  Let $g\in L$. Since $g(c) = 0$, we know
$(g\cdot(x_i-c_i))_R=g$ for at least one $i$.  So $g\in \MP(\C,\{c\})$.
Any other product $(g\cdot(x_j-c_j))_R$ will either be $0$, $g$, or a
multiple of $g$, and hence will not appear in $\MP(\C,\{c\})$. Thus,
all products of linear terms with elements of $L$ are considered, and
all multiples or zeros are removed.   It is impossible for elements of
$L$ to be multiples of one another, as $L\subset \CF(\C)$. 

We also consider all products of elements of $M$ with the linear
elements of $\CF(J_{\{c\}})$. We discard them if their reduction would
be 0, or if they are a multiple of anything in $L$.  If neither holds,
we add them to $N$.  So it only remains to show that no element of $N$
can be a multiple of any other element in $N$, and no element of $N$
can be a multiple of anything in $L$, and thus that we have removed
all possible multiples.  First, no element of $N$ may be a multiple of
an element of $L$, since if $g\in L$, $f\cdot(x_i-c_i)\in N$, and
$f\cdot(x_i-c_i)\cdot p = g$ for some pseudo-monomial $p$, then
$f\big| g$. But this is impossible as $f,g$ are both in $\CF(J_\C)$.
Now, suppose $f\cdot(x_i-c_i)= h\cdot g\cdot(x_j-c_j)$ for $f,g\in
\CF(J_\C)$ and $f\cdot(x_i-c_i), g\cdot(x_j-c_j)\in N$, and $h$ a
pseudo-monomial.  Then as $f\not|g$ and $g\not | f$, we have $i\neq
j$, and so $(x_j-c_j)\big|f$. But this means $f\cdot(x_j-c_j)=f$ and
therefore $f\in L$, which is a contradiction.  So no elements of $N$
may be multiples of one another. 

\end{proof}

%\section{Acknowledgments}
%We would like to thank Ihmar Aldana for his contributions and
%extensive effort in improving this research.  
%
%\bibliographystyle{plain}
%\bibliography{neural}

\end{document}